\newcommand{\Desc}[2]{\State \makebox[3em][l]{#1}#2}
\newcommand{\isrsd}{\ensuremath{I_\rsd}\xspace}
\newcommand{\domain}{\ensuremath{\mathcal{R}}\xspace}
\newcommand{\canonical}{\ensuremath{\mathcal{R}^*}\xspace}
\newcommand{\chch}{\ensuremath{\mathcal{R^>}}\xspace}
\newcommand{\Tau}{\ensuremath{\mathcal{T}}\xspace}
\newcommand{\serd}{\ensuremath{\mathit{SD}}\xspace}
\newcommand{\vect}[1]{\ensuremath{\mathbf{#1}}\xspace}
\definecolor{grey}{gray}{0.9}
 \title{Towards a Characterization of\\ Random Serial Dictatorship\footnote{In previous versions of this paper, we reported that a computer program based on our techniques has proved a characterization of random serial dictatorship for $n\le 5$. We now have doubts concerning the correcntess of the implementation and have removed any mention of it.}}
 \author{Felix Brandt \quad Matthias Greger \quad Ren\'e Romen\\Technical University of Munich, Germany}
\begin{document}

	\maketitle

	Random serial dictatorship (\rsd) is a randomized assignment rule that---given a set of $n$ agents with strict preferences over $n$ houses---satisfies equal treatment of equals, \emph{ex post} efficiency, and strategyproofness. For $n \le 3$, \citet{BoMo01a} have shown that \rsd is characterized by these axioms. Extending this characterization to arbitrary $n$ is a long-standing open problem. By weakening \emph{ex post} efficiency and strategyproofness, we reduce the question of whether \rsd is characterized by these axioms for fixed $n$ to determining whether a matrix has rank $n^2 n!^n$. 
We 
provide computer-generated counterexamples to show that two other approaches for proving the characterization (using deterministic extreme points or restricted domains of preferences) are inadequate.

	\section{Introduction}

Assigning objects to individual agents is a fundamental problem that has received considerable attention by computer scientists as well as economists \citep[e.g.,][]{CDE+06a,SoUn10a,Manl13a,BCM15a}. The problem is known as the \emph{assignment problem}, the \emph{house allocation problem}, or \emph{two-sided matching with one-sided preferences}. In its simplest form, there are $n$ agents, $n$ houses, and each house needs to be allocated to exactly one agent based on the strict preferences of each agent over the houses.
Applications are diverse and include assigning dormitories to students, jobs to applicants, processor time slots to jobs, parking spaces to employees, offices to workers, etc.

A class of simple, well understood, and often applied deterministic assignment rules are \emph{serial dictatorships}, which are based on a fixed priority order over the agents that is independent of the reported preferences. The agent with the highest priority gets to pick her most preferred house, then the second agent chooses her most preferred among the remaining houses, and so on. Serial dictatorships are guaranteed to return a Pareto efficient allocation. On top of that, they are neutral (when houses are permuted, the assignment is permuted accordingly), nonbossy (an agent cannot affect the assignment to other agents without changing the house allocated to herself), and strategyproof (no agent can misreport her preferences in order to obtain a more preferred house). Unsurprisingly, like any deterministic rule, serial dictatorships are highly unfair. For example, consider two agents who both prefer house $h_1$ to $h_2$. Any deterministic rule strongly discriminates the agent who receives $h_2$.

Fairness is typically established by allowing for \emph{probabilistic} assignment rules where each agent receives each house with some probability and the probabilities sum up to $1$ for each agent and each house. The resulting probability matrix is called a bistochastic matrix.
The Birkhoff-von Neumann theorem shows that every bistochastic matrix can be decomposed into a convex combination of permutations matrices. As a consequence, every probabilistic assignment rule can be implemented in practice by picking a deterministic assignment rule at random. 
The two most prominent probabilistic assignment rules are \emph{random serial dictatorship} (\rsd)---also known as \emph{random priority}---and the \emph{probabilistic serial} rule \citep{BoMo01a}. 	

A natural way to obtain a randomized assignment rule is to apply a deterministic rule to every permutation of the agents' roles and then uniformly randomize over all of these $n!$ deterministic assignments. Such a \emph{symmetrization} ensures that ``equals are treated equally''. In fact, \rsd is defined as the symmetrization of all serial dictatorships and has been shown to be equivalent to the symmetrization of Gale's top trading cycles mechanism \citep{AbSo98a, Knut96a}. \citet{Sven99a} showed that any deterministic, strategyproof, nonbossy, and neutral assignment rule is serially dictatorial, implying that the symmetrization of any such rule has to coincide with \rsd. \citet{Papa00a} and \citet{PyUn17a} have characterized broader classes of deterministic assignment rules by replacing neutrality with efficiency.

An approach that works well in conjunction with symmetrization is to prove that the set of rules satisfying a given set of axioms forms a polyhedron with deterministic extreme points (DEP) \citep[see, e.g.,][]{PyUn15a,GPS17a,RoSa20a}.
As a consequence, all rules that satisfy these axioms can be represented as convex combinations of deterministic rules.
Then, any symmetrization results on those deterministic rules naturally extend to all probabilistic rules.
\citet{PyUn15a} have shown that the set of strategyproof rules does not satisfy the DEP property.

The main axiomatic advantage of \rsd is that it satisfies strategyproofness while also guaranteeing efficiency and fairness to some extent. While \rsd does satisfy \emph{ex post} efficiency, it violates a stronger efficiency notion called ordinal efficiency or \sd-efficiency \citep{BoMo01a}. In fact, \citeauthor{BoMo01a} showed that strategyproofness and equal treatment of equals are incompatible with ordinal efficiency. Furthermore, they observed that \rsd only satisfies a weak notion of envy-freeness. The probabilistic serial rule, on the other hand, satisfies ordinal efficiency and envy-freeness but violates strategyproofness. 

A characterization of \rsd via equal treatment of equals, \emph{ex post} efficiency, and strategyproofness is a long-standing open problem \citep[see, e.g.,][]{PaSe13a,PyTr23a} and would cement its pivotal role in settings where strategyproofness is indispensable.  

Unfortunately, to the best of our knowledge, there does not even exist a characterization of all \emph{deterministic}, strategyproof, and efficient assignment rules
\citep[cf.][]{Sven99a}. %
Furthermore, \citet{ABB13b} and \citet{SaSe15a} showed that it is NP-complete to decide whether an agent receives a given house with positive probability under \rsd, stressing its combinatorial intricacy. 

\citet{PyTr23b} recently showed that \rsd is characterized by symmetry, efficiency, and obvious strategyproofness among all assignment rules that, roughly speaking, can be represented as a symmetrization of an extensive-form game where in each stage, one agent is allowed to pick one house from a subset of the remaining houses or ``pass'' on this opportunity. %
Furthermore, \citet{PyTr23a} point out that equal treatment of equals, \emph{ex post} efficiency, and strategyproofness do not suffice to characterize \rsd when using a stronger equivalence notion that interprets two rules as different if they produce different distributions over deterministic assignments, even when the probabilistic assignment is still the same. By contrast, we consider two rules as equivalent if, for each profile, they return the same probabilistic assignment. 

In this paper, we use a linear algebraic approach to make further progress on the problem. 
After introducing the necessary notation and central axioms in \Cref{sec: prelims}, we reduce the question of checking whether the characterization holds to determining the rank of a matrix by weakening \emph{ex post} efficiency and strategyproofness in \Cref{sec: matrix-interpretation}. Based on this idea, we devise an algorithm that determines the rank of the given matrix.
in \Cref{sec: algo}. In \Cref{sec:dep}, we prove  that the set of strategyproof and \emph{ex post} efficient assignment rules does not have the DEP property for $n \ge 3$, and that the \rsd characterization does not hold in a restricted domain of preferences introduced by \citet{ChCh16a}.  Finally, the paper concludes in \Cref{sec: results}.    

\section{Preliminaries}\label{sec: prelims}

Let $N$ be a set of agents and $H$ a set of houses with $|N|=|H|=n$. A \emph{preference profile} $R$ associates with each agent $i\in N$ a preference ordering $\succ_i$ over the houses. The set of all preference profiles is denoted by \domain. \emph{Random assignments} are represented by \emph{bistochastic matrices} $(M_{i,h})_{i\in N, h\in H}$ where $M_{i,h} \ge 0$ and $\sum_{h' \in H}M_{i,h'}=\sum_{i' \in N}M_{i',h}=1$ for all $i \in N$ and $h \in H$. The \emph{support} of a random assignment $M$ is the set of agent-house pairs $(i,h)$ for which $M_{i,h}>0$.
Whenever $M_{i,h}\in\{0,1\}$ for all agent-house pairs $(i, h)$, $M$ is a permutation matrix and represents a \emph{deterministic assignment}.

A probabilistic assignment rule $f$ maps each profile $R$ to a bistochastic matrix $f(R)$ where, with slight abuse of notation, the entry $f(R,i,h)$ in the $i$th row and $h$th column of the matrix corresponds to the probability of agent $i$ receiving house $h$ in profile $R$ and $f(R,i)$ denotes agent $i$'s assignment in $R$.

In the following, we formally define \rsd and the axioms required for the characterization.

\begin{definition}
	Given a profile $R \in \domain$, a deterministic assignment $M$ is (Pareto) \emph{efficient} if there exists no deterministic assignment $M'\neq M$ such that for all $i\in N$ and $h,h'\in H$ with $h \neq h'$, $M'_{i,h'}=M_{i,h}=1$ implies $h'\succ_i h$.
	An assignment rule is \emph{ex post efficient} if for all $R \in \domain$, $f(R)$ can be represented as a convex combination of efficient deterministic assignments. 
\end{definition}

Let $\Pi$ be the set of all (priority) orders over the agents. Denote serial dictatorship for a specific priority order $\pi \in \Pi$ by $\serd_\pi$. For a given profile $R$, each deterministic efficient assignment coincides with the outcome of a serial dictatorship on $R$ \citep[see, e.g.,][]{Mane07a}. Therefore, an assignment rule satisfies \emph{ex post} efficiency if for all $R \in \domain$, there exist weights $\lambda_{\pi}^R \ge 0$ with $\sum_{\pi \in \Pi} \lambda_{\pi}^R=1$ such that $f(R)=\sum_{\pi \in \Pi}\lambda_{\pi}^R \serd_\pi(R)$.

\rsd can now be defined by choosing $\lambda_\pi^R=\nicefrac{1}{n!}$ for every $\pi$ and $R$, i.e.,

\begin{align*}
	\rsd(R)=\sum_{\pi \in \Pi}\frac{1}{n!}\serd_\pi(R).
\end{align*}
Furthermore, we say that a rule coincides with \rsd if it returns the same random assignment as \rsd for each profile.

A variant of \emph{ex post} efficiency merely requires that for each profile the support of the resulting random assignment coincides with that of some \emph{ex post} efficient random assignment. In other words, the support has to be a subset of that of \rsd.

\begin{definition}
	An assignment rule f is \emph{support efficient} if for all $R \in \domain$, $i \in N$, and $h \in H$, $f(R,i,h)=0$ whenever $\serd_\pi(R,i,h)=0$ for all $\pi \in \Pi$. Equivalently, $f$ is support efficient if for all $R \in \domain$, $i\in N$, and $h\in H$, $\rsd(R,i,h)=0$ implies $f(R,i,h)=0$.
\end{definition}

Support efficient is weaker than \emph{ex post} efficiency, but the conditions coincide when $n \le 3$.

\begin{proposition}\label{pro:supp-expo}
	Support efficiency and \emph{ex post} efficiency are equivalent for $n \le 3$.
\end{proposition}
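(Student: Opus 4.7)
The forward direction is immediate: if $f(R) = \sum_\pi \lambda_\pi^R \serd_\pi(R)$ is \emph{ex post} efficient and $f(R, i, h) > 0$, then some $\pi$ with $\lambda_\pi^R > 0$ satisfies $\serd_\pi(R, i, h) > 0$, so $\rsd(R, i, h) > 0$. For the converse, assume $f$ is support efficient, fix $R \in \domain$, and let $S$ denote the support of $\rsd(R)$. By Birkhoff--von Neumann, $f(R)$ can be written as a convex combination of permutation matrices whose supports lie in the support of $f(R)$, and by support efficiency this is contained in $S$. Since every Pareto efficient deterministic assignment is a serial dictatorship outcome (as recalled in the excerpt), it suffices to show that every permutation matrix $P$ with support in $S$ is Pareto efficient.

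The cases $n \le 2$ are immediate. For $n = 3$, I would argue by contradiction: suppose $P$ admits a Pareto-improving cycle of length $k \in \{2, 3\}$ with agents $i_1, \dots, i_k$ and houses $h_j := P(i_j)$ satisfying $h_{j+1} \succ_{i_j} h_j$ cyclically. The key observation is that whenever $(i_j, h_j) \in S$, the priority order $\pi$ realizing this must have $h_{j+1}$ taken before $i_j$'s turn, and the taker must be the agent $i_\ell$ with $\ell \notin \{j, j+1\}$; moreover, $h_{j+1} \succ_{i_\ell} h_j$. To rule out $i_{j+1}$ as the taker, note that her picking $h_{j+1}$ would require her own more-preferred house ($h_{j+2}$ in the 3-cycle case, $h_j$ in the 2-cycle case) to be taken even earlier, and chasing this requirement around the cycle exhausts the available agents---this is where $n \le 3$ is used. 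Once $i_\ell$ is identified as the taker, $h_j$ is still available at $i_j$'s later turn and hence at $i_\ell$'s turn, so $i_\ell$'s choice of $h_{j+1}$ over $h_j$ forces $h_{j+1} \succ_{i_\ell} h_j$.

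Applying this observation in the 2-cycle case to both $(i_1, h_1)$ and $(i_2, h_2)$ yields the contradictory inequalities $h_2 \succ_{i_3} h_1$ and $h_1 \succ_{i_3} h_2$. In the 3-cycle case, it yields $h_{j+1} \succ_{i_{j+2}} h_j$ for $j = 1, 2, 3$, which together with the cycle inequalities pins every agent's preferences down to $h_{j+2} \succ_{i_j} h_{j+1} \succ_{i_j} h_j$. With these preferences, the three agents have three distinct top-ranked houses, so every serial dictatorship returns the rotated assignment $i_j \mapsto h_{j+2}$; in particular $(i_j, h_j) \notin S$ for every $j$, contradicting the assumption.

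The main obstacle I anticipate is the case split within the key observation, specifically ruling out $i_{j+1}$ as the taker of $h_{j+1}$. The propagation of constraints around the cycle is what crucially exploits $n \le 3$---a step that does not go through for $n \ge 4$, consistent with the two efficiency notions diverging beyond $n = 3$.
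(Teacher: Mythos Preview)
Your proof is correct and follows the same high-level strategy as the paper: reduce via Birkhoff--von Neumann to showing that every permutation matrix supported in $S$ is Pareto efficient, then split on whether the blocking cycle has length~2 or~3. The execution within each case differs. For the 2-cycle, the paper fixes $h_1 \succ_3 h_2$ (w.l.o.g.) and shows directly that $(i_2,h_1)\notin S$; you instead run the key observation on both $(i_1,h_1)$ and $(i_2,h_2)$ to force $i_3$ to hold contradictory preferences over $h_1,h_2$. For the 3-cycle, the paper argues via the pattern of \emph{top} houses---either all three tops are distinct (unique efficient assignment) or two agents share a top, forcing the third to receive that house in $M$ while never receiving it under any $\serd_\pi$---whereas you apply your observation to every $j$ to pin down all three preference orders and then land in the ``all tops distinct'' situation. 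Your route is more uniform (one lemma applied repeatedly), the paper's is a little shorter because it avoids fully determining the preferences in the 3-cycle case.
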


\begin{proof}
	The case $n = 2$ is easily solved by exhausting all cases. If the two agents disagree on their top choice, only one deterministic assignment is efficient. Therefore all assignments that violate \emph{ex post} efficiency also violate support efficiency.
	Otherwise, the two agents share the same preferences, in this case all random assignments are \emph{ex post} efficient and thus also support efficient.
	
	For the case $n = 3$, assume that a preference profile $R$ and random assignment $f(R)$ exist such that $f(R)$ is support efficient but not \emph{ex post} efficient.
	Then, there exists a deterministic assignment $M$ that is not efficient but needed to represent $f(R)$.
	Furthermore, by support efficiency, the support of $M$ is efficient.
	
	We consider two cases.
	$M$ can be made efficient either by letting three agents trade their houses in a circular fashion, or by swapping the houses of two agents.

	In the first case, no agent received her top choice in $M$, implying that not all agents rank the same house first. If each house is ranked first by some agents, support efficiency ensures that each agent receives her top choice, so $M$ is efficient. Otherwise, w.l.o.g., agents $1$ and $2$ rank $h_1$ first, whereas agent $3$ ranks $h_2$ first but receives $h_1$ under $M$. However, by support efficiency, it is not permitted that agent $3$ receives $h_1$ in such profiles.   
	
	For the second case, two agents, w.l.o.g. $1$ and $2$, both improve when they swap houses $h_1$ and $h_2$, i.e., $h_1 \succ_1 h_2$ and $h_2 \succ_2 h_1$ but $1$ receives $h_2$ and $2$ receives $h_1$ in $M$.
	Assume now, again w.l.o.g., that $h_1 \succ_3 h_2$.
	It is obvious that in this case agent $2$ cannot receive $h_1$ in any efficient deterministic assignment.
	Again, $M$ violates support efficiency.
	
	We have shown that for $n = 3$, a violation of \emph{ex post} efficiency implies a violation of support efficiency.
	Since \emph{ex post} efficiency implies support efficiency, they are equivalent for $n = 3$.
\end{proof}

We now give an example for $4$ agents in which support efficiency is strictly weaker than \emph{ex post} efficiency.

\begin{example}\label{eg:supp}
	Let the preference relations of agents $1$ and $2$ be $h_1 \succ h_2 \succ h_3 \succ h_4$ and $h_2 \succ h_1 \succ h_3 \succ h_4$ be the preferences of agents $3$ and $4$.
	Consider the random assignment where agents $1$ and $2$ receive the lottery $p(h_1) = 0$, $p(h_2) = \frac{1}{2}$, $p(h_3) = p(h_4) = \frac{1}{4}$ and agents $3$ and $4$ receive the lottery $p(h_1) =  \frac{1}{2}$, $p(h_2) = 0$, $p(h_3) = p(h_4) = \frac{1}{4}$.
	This assignment violates \emph{ex post} efficiency because each efficient deterministic assignment assigns either $h_1$ to agent $1$ or $2$ or it assigns $h_2$ to agent $3$ or $4$.
	Since agents $1$ and $2$ never receive $h_1$ and agents $3$ and $4$ never receive $h_2$ from the random assignment, it cannot be represented as a distribution over efficient deterministic assignments.
	The assignment satisfies support efficiency since each house can go to each agent in some efficient deterministic assignment.
\end{example}

To judge whether an agent $i$ is able to beneficially misreport her preferences, we, analogously to \citet{BoMo01a}, assume that agent $i$ has a von Neumann-Morgenstern utility function $u_i$ which is consistent with $\succ_i$. This means that there exist $u_i\colon H\rightarrow \mathbb{R}$ such that $u_i(f(R))=\sum_{h \in H}u_i(h) f(R,i,h)$, and $u_i(h_k)>u_i(h_l)$ if and only if $h_k \succ_i h_l$. Since the concrete utility function is unknown, a manipulation counts as beneficial if there exists a utility function $u_i$ consistent with $\succ_i$ for which it is beneficial. A rule without such manipulation incentives is called strategyproof.\footnote{This version of strategyproofness for probabilistic assignment rules is sometimes also called (strong) \sd-strategyproofness \citep[see, e.g.,][]{Bran17a}.}

\begin{definition}
	An assignment rule $f$ is \emph{strategyproof} if for all $R,R' \in \domain$ with ${\succ_j}={\succ'_j}$ for all $j \in N \setminus \{i\}$, $\sum_{h' \succ_i h} f(R,i,h') \ge \sum_{h' \succ_i h} f(R',i,h')$ for every $h \in H$.
\end{definition}

To implement strategyproofness, we leverage a result from \citet{Gibb77a}, which shows that a mechanism is strategyproof if and only if it is localized and nonperverse. In particular, it suffices to consider swaps of two houses that are adjacent in the manipulator's ranking.\footnote{\citeauthor{Gibb77a} considers the general social choice domain. \citet{MeSe21a} have rediscovered this equivalence in the context of random assignment.}
 
\begin{definition}
	Let $R,R' \in \domain$, $i\in N$, and $h_k,h_l\in H$ such that ${\succ_j}={\succ'_j}$ for all $j \in N \setminus \{i\}$ and ${\succ'_i}={\succ_i}\setminus\{(h_k,h_l)\}\cup\{(h_l,h_k)\}$.
	An assignment rule $f$ is
	\begin{itemize}
		\item \emph{localized} if $f(R,i,h)=f(R',i,h)$ for all $h\in H\setminus\{h_k,h_l\}$, and
		\item \emph{nonperverse} if $f(R,i,h_k) \ge f(R',i,h_k)$ and $f(R,i,h_l) \le f(R',i,h_l)$.
	\end{itemize}
\end{definition}

It turns out that weakening strategyproofness to localizedness 
eliminates the inequality constraints imposed by nonperverseness.

\begin{definition}
	An assignment rule $f$ satisfies \emph{equal treatment of equals} if for all $R \in \domain$ and $i,j \in N$ with ${\succ_i}={\succ_j}$, $f(R,i,h)=f(R,j,h)$ for all $h \in H$.
\end{definition} 
Thus, equal treatment of equals ensures that agents with the same preferences receive the same assignment.

Finally, we introduce a natural property that is helpful for reducing the number of profiles a rule needs to be defined on.

\begin{definition}
	An assignment rule $f$ is \emph{symmetric} if for all $R \in \domain$, any permutation of the agents $\pi: N \to N$ we have $\pi \circ f(R)=f(\pi \circ R)$ and for any permutation of the houses $\tau: H \to H$ we have $\tau \circ f(R)= f(\tau \circ R)$. Here, $\pi$ permutates the rows and $\tau$ permutates the columns of $R$ and $f(R)$.
\end{definition}

Loosely speaking, a symmetric rule does not take into account the identities of agents and houses.

\begin{remark}
The two conditions of symmetry are known as anonymity and neutrality in the more general domain of social choice \citep[see, e.g.,][]{Zwic15a}. Within the assignment domain, anonymity cannot be considered in isolation because agents are indifferent between assignments in which they receive the same house. Viewing agents as voters and deterministic assignments as alternatives, permutations via neutrality allow for permuting assignments, not houses. 
	Permuting two voters $i$ and $j$ via anonymity results in an ``illegal'' assignment profile because agent $i$ is indifferent between assignments in which agent $j$ receives the same house and vice versa. This can be rectified by permuting assignments accordingly. As a consequence, anonymity should only be considered in conjunction with neutrality in the assignment domain. 
\end{remark}

Note that symmetry is a stronger axiom than equal treatment of equals.
\begin{restatable}{proposition}{symmetry}
	Every symmetric assignment rule satisfies equal treatment of equals.
\end{restatable}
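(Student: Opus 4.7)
The plan is to use only the agent-permutation half of symmetry, applied to the transposition that swaps the two agents with identical preferences. Fix a profile $R \in \domain$ and two agents $i, j \in N$ with ${\succ_i}={\succ_j}$. Let $\pi\colon N \to N$ be the transposition swapping $i$ and $j$ (and fixing every other agent).

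First I would observe that $\pi \circ R = R$. Indeed, $\pi \circ R$ is the profile in which agent $i$ holds the preferences previously held by agent $j$ and vice versa, while every other agent's preferences are unchanged; since $\succ_i=\succ_j$, this leaves every agent's preference ordering exactly as in $R$. Applying $f$ to both sides and using symmetry then yields
\begin{equation*}
f(R) = f(\pi \circ R) = \pi \circ f(R).
\end{equation*}
Since $\pi$ permutes the rows of $f(R)$ by swapping row $i$ with row $j$, the identity $\pi \circ f(R) = f(R)$ says precisely that rows $i$ and $j$ of $f(R)$ are equal, i.e., $f(R,i,h) = f(R,j,h)$ for every $h \in H$. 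As $R$, $i$, and $j$ were arbitrary (subject to ${\succ_i}={\succ_j}$), this is equal treatment of equals.

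There is no real obstacle here; the only subtlety is bookkeeping on what $\pi \circ R$ means. I would be explicit that the permutation acts on the indexing of the rows of the profile matrix, so that ``agent $\pi(k)$'s preferences in $\pi \circ R$ are agent $k$'s preferences in $R$'', to make sure the equality $\pi \circ R = R$ is transparent. The neutrality half of symmetry (permutations of houses) is not needed for this direction.
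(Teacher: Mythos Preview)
Your proof is correct and essentially identical to the paper's own argument: both pick the transposition $\pi=(i\,j)$, observe that $\pi\circ R=R$ since ${\succ_i}={\succ_j}$, and then use the agent-permutation half of symmetry to conclude $f(R)=\pi\circ f(R)$, i.e., $f(R,i)=f(R,j)$.
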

\begin{proof}
	Let $f$ be a symmetric assignment rule and $R$ be an arbitrary profile with ${\succ_i}={\succ_j}$ for two agents $i,j \in N$. Consider the permutation $\pi=(i j)$ that only swaps the identities of agents $i$ and $j$. As ${\succ_i}={\succ_j}$, $R=\pi \circ R$ implies $f(R)=\pi \circ f(R)$ by symmetry. In particular, $f(R,i)=\pi \circ f(R,i)=f(R,j)$ showing that agents $i$ and $j$ receive the same assignment under $f$ in $R$. 
\end{proof}

To see that equal treatment of equals does not imply symmetry, consider $n=2$ and the assignment rule $f$ with $f(R)=\rsd(R)$ for the two profiles where both agents have the same preferences. For the other two profiles $R'$ and $R''$ where both agents have different preferences, let $f(R',1)=(1,0)$ and $f(R'',1)=(1,0)$. Clearly, $f$ satisfies equal treatment of equals. However, moving from $R'$ to $R''$ by permuting the two houses does not permute the assignments. In both profiles, agent $1$ receives $h_1$, contradicting $\tau \circ f(R')=f(\tau \circ R')=f(R'')$. 

Symmetry imposes an equivalence class structure on \domain that allows $f$ to be well-defined by only defining it on the set of canonical profiles $\canonical \subset \domain$ which contains one representative profile for each equivalence class that is chosen according to some predefined order over \domain.
We will show that positive results for \canonical carry over to \domain without imposing symmetry, a necessary simplification step given that $|\domain|=n!^n$.

\section{A linear algebraic view on the problem}\label{sec: matrix-interpretation}

	Our overall goal in this section is to describe the set of all rules that satisfy equal treatment of equals, \emph{ex post} efficiency, and strategyproofness by a system of linear equations. 
	
	To this end, note that, for fixed $n$, all axioms except \emph{ex post} efficiency are defined and can be represented by constraints in terms of a vector $\vect{x}=\left(\vect{x}_{(R,i,h)}\right) \in \mathbb{R}^{n^2 n!^n}$ where $\vect{x}_{(R,i,h)}$ corresponds to $f(R,i,h)$. By contrast, efficiency constraints require us to represent an assignment rule $f$ by a vector $\vect{x}=\left(\vect{x}_{(R,\pi)}\right) \in \mathbb{R}^{n! n!^n}$ where $\vect{x}_{(R,\pi)}$ corresponds to the weight $\lambda_{\pi}^R$ of $\serd_\pi$ in profile $R$. 
	
	Generally, it is possible to also represent the other axioms in terms of $\vect{x}_{(R,\pi)}$, e.g., for equal treatment of equals, one has to find the set of all combinations of serial dictatorships that yield the same probabilistic assignment for both agents and each profile where two agents $i$ and $j$ have the same preferences. This can be achieved by requiring that the sum of the weights of all serial dictatorships where $i$ receives house $h$ has to equal the sum of the weights of all serial dictatorships where $j$ gets $h$.
	However, the representation of $f$ in terms of $\left(\vect{x}_{(R,\pi)}\right)$ is not unique \citep[see, e.g.,][]{PyTr23a} and requires $n! n!^n$ instead of $n^2 n!^n$ variables.
	
Weakening \emph{ex post} efficiency to support efficiency enables the representation of efficiency via $f(R,i,h)$. On top of that, we also weaken strategyproofness to localizedness due to the fact that nonperverseness is the only axiom (apart from the nonnegativity part of the bistochastic matrix constraints) that cannot be written in terms of linear equations.

	\begin{conjecture}\label{con:strong-rsd}
		\rsd is the only assignment rule that satisfies equal treatment of equals, support efficiency, and localizedness.
	\end{conjecture}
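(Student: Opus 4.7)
The plan is to follow the linear-algebraic approach of \Cref{sec: matrix-interpretation}: encode each of the three axioms, together with the bistochastic structure, as a system of linear equations in the $n^2 n!^n$ variables $\vect{x}_{(R,i,h)}$, and argue that \rsd is the unique solution. Specifically, I would collect the equations (i) $\sum_{h\in H}\vect{x}_{(R,i,h)} = \sum_{i\in N}\vect{x}_{(R,i,h)} = 1$ (bistochasticity), (ii) $\vect{x}_{(R,i,h)} = 0$ whenever $\rsd(R,i,h)=0$ (support efficiency), (iii) $\vect{x}_{(R,i,h)} = \vect{x}_{(R,j,h)}$ whenever ${\succ_i}={\succ_j}$ (equal treatment of equals), and (iv) $\vect{x}_{(R,i,h)} = \vect{x}_{(R',i,h)}$ for every pair $R,R'$ differing by an adjacent swap of agent $i$ and every $h$ not involved in the swap (localizedness) into a single matrix equation $A\vect{x} = \vect{b}$. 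The crucial point is that after weakening \emph{ex post} efficiency to support efficiency and strategyproofness to localizedness, every remaining condition is an equality, so the feasible set is the affine subspace $\{\vect{x}:A\vect{x}=\vect{b}\}$.

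Next I would verify that \rsd is a feasible solution: symmetry yields equal treatment of equals, \emph{ex post} efficiency yields support efficiency, and strategyproofness yields localizedness. The conjecture then reduces to the claim that this solution is unique, equivalently $\rank(A) = n^2 n!^n$, equivalently the kernel of $A$ is trivial. I would pass to differences $\vect{y} = \vect{x} - \rsd$ and try to show that any vector with zero row and column sums, vanishing outside the support of \rsd, symmetric across identical-preference agent pairs, and invariant under adjacent-swap transitions must be identically zero. Nonnegativity need not appear as an explicit constraint: once uniqueness is established, the (nonnegative) \rsd vector is forced to be the only solution.

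The main obstacle is proving triviality of the kernel. For $n \le 3$, \Cref{pro:supp-expo} allows us to import the existing characterization of \citet{BoMo01a} directly. For $n \ge 4$, where support efficiency is strictly weaker than \emph{ex post} efficiency (\Cref{eg:supp}), information about which entries must vanish has to be propagated across profiles via chains of localizedness and equal-treatment equations. A natural attempt is, for each coordinate $\vect{y}_{(R,i,h)}$, to construct a sequence of adjacent-swap transitions leading to a profile where support efficiency forces the value to zero, and then transport that zero back along the chain, using equal treatment of equals to bridge profiles related by an agent-permutation. Showing that such chains can be built uniformly, so that the row space of $A$ spans the entire ambient space, is the combinatorial heart of the problem and is precisely what the authors reduce to a rank computation in \Cref{sec: algo}; the hardest step will be handling profiles on the support boundary, where newly forced zeros arise and must be shown to be reachable from every profile by a valid sequence of localized moves compatible with equal treatment.
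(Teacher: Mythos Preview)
Your proposal is not a proof but a plan, and it is essentially the same plan the paper itself follows in \Cref{sec: matrix-interpretation}: encode bistochasticity, support efficiency, equal treatment of equals, and localizedness as the linear system $\vect{A}\vect{x}=\vect{b}$, observe that $\vect{x}^{\rsd}$ is feasible, and reduce the conjecture to $\ker(\vect{A})=\{\vect{0}\}$ (\Cref{cor:eq}). Your remark that nonnegativity can be dropped is exactly the content of \Cref{prop:rank}. So at the level of the reduction there is nothing new and nothing wrong.

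The genuine gap is that you never carry out the step you yourself flag as ``the combinatorial heart of the problem'': establishing that the kernel is trivial. You describe an intended strategy---propagate zeros along chains of adjacent-swap localizedness equalities, bridge via equal treatment of equals, and anchor at profiles where support efficiency forces a coordinate to vanish---but you do not produce such chains, let alone show they reach every coordinate. This is precisely the step the paper does \emph{not} resolve analytically: the statement is labelled a \emph{conjecture}, and the paper's only ``proof'' is the algorithmic verification of \Cref{sec: algo}, which checks the rank condition by computer for fixed small~$n$. For $n\le 3$ you correctly note that \Cref{pro:supp-expo} lets one invoke \citet{BoMo01a}, but for $n\ge 4$ your outline stops exactly where the open problem begins. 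In short, the approach matches the paper's, but the proposal contains no argument beyond the reduction, and the conjecture remains unproved for general~$n$.
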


	Proving this statement immediately implies that \rsd is characterized by equal treatment of equals, \emph{ex post} efficiency, and strategyproofness.
	In case the statement does not hold, a counterexample might give us new insights and ideas to construct a counterexample for the original characterization. In particular, each counterexample of the original conjecture must also be a counterexample for \Cref{con:strong-rsd}.
	
	We now reformulate the problem as a system of linear equations such that every rule satisfying all axioms from \Cref{con:strong-rsd} is a solution to the system.
	As already mentioned, we can represent assignment rules $f$ as vectors \vect{x}, where $\vect{x}_{(R, i, h)} = f(R, i, h)$ for all profiles $R$, agents $i$, and houses $h$.
	The constraints induced by the axioms are represented by the rows of a matrix \vect{A} and a vector \vect{b}, such that $\vect{Ax} = \vect{b}$ if $f$ represented by \vect{x} satisfies all axioms. The columns of \vect{A} correspond to the triples $(R,i,h)$.
	Define $\vect{e}_{(R,i,h)} \in \mathbb{R}^{1 \times n^2 n!^n}$ as the unit vector with $1$ at entry $(R,i,h)$ and $0$ otherwise. The rows of \vect{A} have the following form depending on the type of axiom. 
	
	\begin{enumerate}
		\item Bistochasticity constraints (excluding nonnegativity constraints): \vect{A} contains a row $\vect{a}_k$ for each profile $R$
		\begin{enumerate}
			\item and agent $i$, with $\vect{a}_{k} = \sum_{h \in H} \vect{e}_{(R,i,h)}$, and
			\item and house $h$, with $\vect{a}_{k} = \sum_{i \in N} \vect{e}_{(R,i,h)}$.
		\end{enumerate}
		For such rows, $\vect{b}_k=1$.
		\item Support efficiency: \vect{A} contains a row $\vect{a}_k$ for each triple $(R, i, h)$ satisfying $\rsd(R,i,h)=0$, with $\vect{a}_{k} = \vect{e}_{(R,i,h)}$. For such rows, $\vect{b}_k=0$.
		\item Localizedness: \vect{A} contains a row $\vect{a}_k$ for each profile $R$, agent $i$, house $h$, and each possible adjacent swap to profile $R'$ that agent $i$ can perform that does not move house $h$, with $\vect{a}_{k} = \vect{e}_{(R,i,h)}-\vect{e}_{(R',i,h)}$. For such rows, $\vect{b}_k=0$.
		\item Equal treatment of equals: \vect{A} contains a row $\vect{a}_k$ for each profile $R$, house $h$, and agent pair $(i, j)$ such that $i \ne j$ and ${\succ_i}={\succ_j}$, with $\vect{a}_{k} = \vect{e}_{(R,i,h)}-\vect{e}_{(R,j,h)}$. For such rows, $\vect{b}_k=0$.
	\end{enumerate}

As \rsd satisfies all axioms, $\vect{A}\vect{x}^{\rsd} = \vect{b}$, where $\vect{x}^\rsd$ is the vector representing \rsd.

In general, it does not hold that every solution to $\vect{Ax} = \vect{b}$ corresponds to a valid assignment rule since the nonnegativity of variables $\vect{x}_{(R,i,h)}$ 
is not guaranteed. Nevertheless, the structure of \rsd allows us to mix any other solution with $\vect{x}^\rsd$ in a way that returns a new assignment rule satisfying all axioms.

\begin{proposition}\label{prop:rank}\label{thm:rank}
	Let $\vect{y}\neq \vect{x}^\rsd$ be a solution to $\vect{Ax} = \vect{b}$. Then, there exists $\lambda>0$ such that $\lambda \vect{y}+(1-\lambda)\vect{x}^\rsd$ is an assignment rule that satisfies all axioms and differs from \rsd.
\end{proposition}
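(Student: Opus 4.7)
The plan is to exhibit $\lambda > 0$ small enough that the convex combination $\vect{z}_\lambda = \lambda \vect{y} + (1-\lambda)\vect{x}^\rsd$ is nonnegative, and then observe that nonnegativity is the only ingredient missing between being a solution of $\vect{A}\vect{x} = \vect{b}$ and being a bona fide assignment rule satisfying the three axioms of \Cref{con:strong-rsd}.

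First I would note that the system $\vect{A}\vect{x}=\vect{b}$ is affine, so linearity immediately gives $\vect{A}\vect{z}_\lambda = \lambda\vect{A}\vect{y} + (1-\lambda)\vect{A}\vect{x}^\rsd = \lambda\vect{b} + (1-\lambda)\vect{b} = \vect{b}$ for every $\lambda\in\mathbb{R}$. Hence $\vect{z}_\lambda$ automatically satisfies the bistochasticity equalities, the support-efficiency equalities, the localizedness equalities, and the equal-treatment-of-equals equalities encoded by the rows of $\vect{A}$.

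Next I would verify nonnegativity of $\vect{z}_\lambda$ coordinate by coordinate. For a triple $(R,i,h)$ with $\rsd(R,i,h) = 0$, the support-efficiency rows force $\vect{y}_{(R,i,h)} = 0$ and $\vect{x}^\rsd_{(R,i,h)} = 0$, so $(\vect{z}_\lambda)_{(R,i,h)} = 0$. For a triple with $\rsd(R,i,h) > 0$, we have $\vect{x}^\rsd_{(R,i,h)} > 0$. If $\vect{y}_{(R,i,h)} \ge 0$ the entry is nonnegative for every $\lambda \in [0,1]$; otherwise the requirement $\lambda \vect{y}_{(R,i,h)} + (1-\lambda)\vect{x}^\rsd_{(R,i,h)} \ge 0$ translates into the explicit bound $\lambda \le \vect{x}^\rsd_{(R,i,h)} / \bigl(\vect{x}^\rsd_{(R,i,h)} - \vect{y}_{(R,i,h)}\bigr)$. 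Taking $\lambda$ to be the minimum of these finitely many strictly positive upper bounds (and $\lambda = 1$ if no coordinate of $\vect{y}$ is negative) yields a strictly positive $\lambda$ for which $\vect{z}_\lambda \ge 0$, so $\vect{z}_\lambda$ is a bistochastic-matrix-valued rule satisfying all the stated axioms.

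Finally, because $\vect{y} \neq \vect{x}^\rsd$, any $\lambda > 0$ forces $\vect{z}_\lambda \neq \vect{x}^\rsd$, so the constructed rule differs from \rsd. There is no real obstacle here; the only subtlety worth flagging is the use of support efficiency to neutralize coordinates where $\vect{x}^\rsd$ vanishes, since otherwise a negative entry of $\vect{y}$ at a zero coordinate of $\vect{x}^\rsd$ could not be absorbed by any positive $\lambda$.
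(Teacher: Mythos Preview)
Your proposal is correct and follows essentially the same route as the paper: both arguments use linearity of $\vect{A}$ to carry all equality constraints to the convex combination, invoke support efficiency to ensure that coordinates where $\vect{x}^{\rsd}$ vanishes are also zero in $\vect{y}$, and then pick a small enough positive $\lambda$ to guarantee nonnegativity at the remaining coordinates. Your version is simply more explicit about the coordinate-wise bound on $\lambda$, but the underlying idea is identical.
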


\begin{proof}
	Apart from nonnegativity, $\lambda \vect{y}+(1-\lambda)\vect{x}^\rsd$ satisfies all axioms for all $\lambda \in [0,1]$ as
	\begin{align*}
		\vect{A}(\lambda \vect{y}+(1-\lambda)\vect{x}^\rsd)=\lambda \vect{Ay}+(1-\lambda)\vect{A}\vect{x}^\rsd=\vect{b}.
	\end{align*}
	In order to ensure nonnegativity, choose $\lambda^*>0$ such that $\lambda^* \vect{y}_{(R,i,h)} +(1-\lambda^*)\vect{x}_{(R,i,h)}^\rsd \ge 0$ for all $(R,i,h)$. This is possible due to the fact that $\vect{x}_{(R,i,h)}^\rsd =0$ implies $\vect{y}_{(R,i,h)}=0$ as \vect{y} satisfies support efficiency.
	
Thus, $\lambda^* \vect{y}+(1-\lambda^*)\vect{x}^\rsd$ corresponds to an assignment rule that satisfies all axioms and differs from \rsd since the representation of a rule in terms of $(\vect{x}_{(R,i,h)})$ is unique by definition. 
\end{proof}

\Cref{prop:rank} shows that whenever there exists a solution $\vect{y} \neq \vect{x}^\rsd$ to $\vect{Ax} = \vect{b}$, \Cref{con:strong-rsd} cannot hold. Furthermore, $\vect{y}-\vect{x}^\rsd \neq \vect{0}$ lies in the kernel $\ker(\vect{A})$ of \vect{A}.  

\begin{corollary}\label{cor:eq}
	The following statements are equivalent:
	\begin{itemize}
		\item \Cref{con:strong-rsd} holds, i.e., the only solution to $\vect{Ax} = \vect{b}$ is $\vect{x}^\rsd$.
		\item \vect{A} has full rank, i.e., $\mathit{rank}(\vect{A}) = n^2n!^n$.
		\item $\ker(\vect{A}) = \{\vect{0}\}$.
	\end{itemize}
\end{corollary}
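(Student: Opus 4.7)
The plan is straightforward linear algebra combined with a single application of Proposition~\ref{prop:rank}. The equivalence of the second and third statements is immediate: since $\vect{A}$ has exactly $n^2 n!^n$ columns, the rank--nullity theorem gives that full column rank and trivial kernel are the same condition, so I would dispense with this equivalence in one sentence.

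For the implication from full rank to Conjecture~\ref{con:strong-rsd}, I would observe that $\vect{x}^\rsd$ is a solution to $\vect{Ax} = \vect{b}$ (since \rsd satisfies all three axioms), and that any rule $f$ satisfying the three axioms likewise yields a solution via its representation as a vector. If $\vect{A}$ has full column rank, the solution is unique, so $f$ must coincide with \rsd. This direction is a one-line argument.

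The remaining implication---that the conjecture forces $\ker(\vect{A}) = \{\vect{0}\}$---is where Proposition~\ref{prop:rank} enters, and I would prove it by contrapositive. Given any nonzero $\vect{z} \in \ker(\vect{A})$, set $\vect{y} := \vect{x}^\rsd + \vect{z}$; then $\vect{Ay} = \vect{A}\vect{x}^\rsd + \vect{Az} = \vect{b}$ and $\vect{y} \neq \vect{x}^\rsd$. Feeding this $\vect{y}$ into Proposition~\ref{prop:rank} produces a genuine assignment rule, distinct from \rsd, that satisfies equal treatment of equals, support efficiency, and localizedness---directly refuting the conjecture.

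The only real subtlety, and the reason Proposition~\ref{prop:rank} is needed rather than a direct argument, is that an arbitrary element of the affine space $\vect{x}^\rsd + \ker(\vect{A})$ may violate nonnegativity and therefore fail to be a bistochastic matrix at all. The proposition takes care of this by mixing slightly with $\vect{x}^\rsd$; the mixing is safe because support efficiency forces every solution to vanish wherever \rsd does, so the zero entries of \rsd are not pulled negative. This single nonnegativity step is the only nontrivial point; everything else is bookkeeping around the linear system.
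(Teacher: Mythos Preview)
Your proposal is correct and mirrors exactly what the paper does: it states the corollary immediately after observing that any solution $\vect{y}\neq\vect{x}^{\rsd}$ yields, via Proposition~\ref{prop:rank}, a genuine counterexample to the conjecture and a nonzero kernel element $\vect{y}-\vect{x}^{\rsd}$, with the remaining equivalences being standard linear algebra. Your identification of the nonnegativity issue as the sole nontrivial point (handled by Proposition~\ref{prop:rank}) is precisely the paper's reasoning.
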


In the next section, we use \Cref{prop:rank} and these equivalences to devise an algorithm.
We believe that \Cref{cor:eq} could also be helpful for finding a general analytic proof of \Cref{con:strong-rsd}.

\section{Checking whether the matrix has full rank}\label{sec: algo}
	
The following algorithm shows that \Cref{con:strong-rsd} holds 
by proving that \vect{A} has full rank.
\Cref{thm:rank} shows that this is equivalent to proving \Cref{con:strong-rsd} which in turn implies the original \rsd characterization.
In principle, the rank of \vect{A} can be computed using standard methods such as Gaussian elimination.
However, there are two main issues with that approach.
First, the size of the matrix is larger than $n^2n!^n \times n^2n!^n$, see \Cref{fig:columnnumbers} for explicit numbers.

\begin{figure}[bp]\centering
        
		\begin{tabular}{c|cccccc}
            $n$ & 2 & 3 & 4 & 5 & 6 & 7 \\\midrule
            $n^2n!^n$ & 16 & 1944 & $5.3 \cdot 10^6$ & $6.2 \cdot 10^{11}$ & $5 \cdot 10^{18}$ & $4 \cdot 10^{27}$
        \end{tabular}
		
		\caption{Number of columns of $\vect{A}$ depending on $n$.}
		\label{fig:columnnumbers}
\end{figure}
		
This can be partially mitigated because the matrix is sparse.
Even though most entries are zero and do not need to be stored in memory, the remaining matrix is still very large.
Second, standard methods often run into numerical problems.
	
	To circumvent these issues, the algorithm we propose in this section uses search to construct all $n^2n!^n$ rows $\vect{e}_{(R, i, h)}$ using elementary row operations implying that the matrix has full rank. 
	In particular, we add or substract multiples of one row from another or multiply a row by $-1$.
	Division is only used when we found a row that has only one non-zero entry to normalize.
	In this way the algorithm is guaranteed to not run into any numerical problems.
	Furthermore, we never explicitly construct the matrix, and use the symmetry of the domain to simplify the computation.

	The main idea of the algorithm builds on the fact that localizedness is the only axiom which connects profiles, i.e., the rows of matrix \vect{A} have nonzero entries in different preference profiles.
	On the contrary, for all other axioms, the rows have nonzero entries only for a single profile.
	Starting with some preference profile $R_s$ where all agents share the same preferences, it is possible to build the rows $\vect{e}_{(R_s, i, h)}$ for all agent-house pairs $(i, h)$ using elementary row operations.
	This can be done by adding ``equal treatment of equals rows'' to ``bistochasticity rows'' until the only nonzero entry is at index $(R, i, h)$.
	With this method we can construct $\vect{e}_{(R_s,i,h)}$ for all agent-house pairs $(i,h)$. From an axiomatic point of view, it is clear that all agents need to receive the same assignment in $R_s$.
	
	Next, the new rows $\vect{e}_{(R_s, i, h)}$ can be added to the localizedness rows to build new rows $\vect{e}_{(R', i, h)}$ for profiles $R'$ that can be reached by swap manipulations from $R_s$.
	The algorithm can then try to solve these profiles, find new rows, and then propagate them further.
	
	Thus, the algorithm consists of two parts, namely
\begin{itemize}
	\item a subroutine that evaluates a single profile $R$ and builds as many rows $\vect{e}_{(R, i, h)}$ using elementary row operations as possible, and
	\item the main loop which builds rows for profiles that can be reached using localizedness and chooses the next profile to evaluate.
\end{itemize}
In contrast to $R_s$, note that in general, it is not possible to completely ``solve'' a profile at the first visit. 
	Therefore, the main loop uses a priority queue to track which profile received the most rows since it was last considered.
	Guiding the search using this heuristic improves the runtime of the algorithm over naive breath first search or depth first search.
	
	The algorithm continues the search until the identity matrix is contained in \vect{A} or it proves that this is not possible.
	For that, it keeps track of the triples $(R, i, h)$ for which the row $\vect{e}_{(R, i, h)}$ was constructed with an indicator function $\isrsd: \domain \times N \times H \rightarrow \{1, 0\}$ that returns $1$ if the row $\vect{e}_{(R, i, h)}$ is already contained in the matrix and $0$ otherwise.
	This indicator function is updated during program execution.
	When we refer to $\isrsd$, we refer to the current state of algorithm execution, unless stated otherwise.
	At the start of the algorithm $\isrsd \equiv 0$ is initialized to be $0$ for every triple.
	In a first step, it sets $\isrsd(R, i, h) = 1$ for all triples $(R, i, h)$ with $\rsd(R,i,h)=0$ since for those, $\vect{a}_k = \vect{e}_{(R, i, h)}$ by definition.
	Once $\isrsd \equiv 1$, the algorithm terminates as it has shown that the matrix \vect{A} has full rank.
	
	We first present the subroutine, then the complete algorithm.

	\subsection{Solving single profiles}
	
	Given a preference profile $R$ and indicator function $\isrsd$, the following subroutine computes all agent-house pairs $(i, h)$ for which the vector $\vect{e}_{(R, i, h)}$ can be constructed.
	We start by writing the rows corresponding to the bistochasticity and equal treatment of equals constraints of $R$ into a separate matrix $\vect{B}$.
	The main idea then is to simplify these rows by setting all indices $(R, i, h)$ to zero if $\isrsd(R, i, h) = 1$. In this way, we incorporate constraints from support efficiency and localizedness into $\vect{B}$.
	This is allowed since $\isrsd(R, i, h) = 1$ implies $\vect{e}_{(R, i, h)}$ was constructed which in turn allows us to add or substract it from each row in $\vect{B}$ such that the entry becomes $0$.
	
	If the resulting matrix \vect{B} contains rows with only one nonzero entry at position $(R, i, h)$, then we set $\isrsd(R, i, h)$ to $1$ and go back to the previous step.
	Otherwise, no simplifications are possible.
	We check if combining the resulting equal treatment of equals and bistochasticity rows results in new rows with only one nonzero entry.
	To do so, it is sufficient to check for each bistochasticity row $\vect{b}$ and house $h$ if for all agents $i$ with $b_{(i, h)} = 1$ we have for all agents $j$ that $b_{(i, h)} = b_{(j, h)} = 1$ if and ony if $\succ_i = \succ_j$.
	If this is the case, we can construct the rows $\vect{e}_{(R, i, h)}$ for all $i$ for which $b_{(i, h)} = 1$ by adding the equal treatment of equals rows to the bistochasticity row.
	For these rows, the algorithm can once again set $\isrsd(R, i, h) = 1$ and go back to the first step.
	Otherwise, if no new rows are found, the subroutine terminates and returns the updated indicator function.
	
	The subroutine only uses elementary row operations to construct new rows.
	Furthermore, it can restrict the matrix to a single profile $R$ since it only considers matrix rows that have only zero entries for all indices of other profiles.
	Thus, these operations do not alter the rank of the matrix \vect{A}.

	Another important property of the subroutine is that it is symmetric with respect to inputs.
	In other words, if we permute all inputs with some permutation of the agents $\pi \in \Pi$ and houses $\tau \in \Tau$, the updates to the indicator function are permuted by the same permutation.
	This property follows from the fact that the algorithm is deterministic and permutations of the profile permute the indices of the matrix $\vect{B}$ in the same way.
	Thus, the results are the same up to permutation.

	\begin{algorithm}[tbp]
		\caption{Subroutine that constructs new rows $\vect{e}_{(R, i, h)}$ for input profile $R$.}
		\label{alg:rsd-subroutine}
		\begin{algorithmic}
			\Input
			\Desc{$R$}{Preference profile}
			\Desc{$\isrsd$}{Function $\isrsd: \domain \times N \times H \rightarrow \{0, 1\}$}
			\EndInput
		\end{algorithmic}
		\begin{algorithmic}[1]
			\State $\vect{B} \gets $ Matrix with bistochasticity and equal treatment of equals rows for profile $R$
			\While{\isrsd was updated}
				\While{\isrsd was updated}
					\ForAll{$(i, h) \in N \times H$}
						\If{$\rsd(R, i, h) = 1$}
							\State $ b_{(i, h)} \gets 0$ for all rows $\vect{b}$ in $\vect{B}$
						\EndIf
					\EndFor
					\ForAll{Rows $\vect{b}$ in $\vect{B}$}
						\If{$\exists (i, h) \in N \times H$ such that $\vect{b} = \vect{e}_{(R, i, h)}$ and $\isrsd(R, i, h) = 0$}
							\State $\isrsd(R, i, h) \gets 1$.
						\EndIf
					\EndFor
				\EndWhile
				\ForAll{Rows $\vect{b}$ in $\vect{B}$ and $h \in H$}
					\If{$\forall i \in N: b_{(i, h)} = 1 \Rightarrow \forall j \in N\, (b_{(j, h)} = 1 \Leftrightarrow  {\succ_i} = {\succ_j})$}
						\ForAll{$i \in N$ \textbf{if} $ b_{(i, h)} = 1$}
							\State $\isrsd(R, i, h) \gets 1$.
						\EndFor
					\EndIf
				\EndFor
			\EndWhile
		\end{algorithmic}
	\end{algorithm}

	\subsection{Guided search and localizedness}

	\Cref{alg:rsd-subroutine} is able to evaluate single profiles.
	To complete the algorithm, we still need to decide the evaluation order of the profiles and combine the new rows with localizedness.
	The full algorithm is described in \Cref{alg:rsd}.
	
	The first step is to initialize the indicator function \isrsd that keeps track of the rows $\vect{e}_{(R, i, h)}$ that where already build.
	We initialize all entries with $0$, before setting all entries $(R,i,h)$ with $\rsd(R,i,h)=0$ to $1$ as this implies $f(R,i,h)=0$ by support efficiency.
	
	Then, we use a standard best-first search algorithm to choose which profile to evaluate next.
	The heuristic used to determine the priority of profile $R$ is the number of rows $\vect{e}_{(R, i, h)}$ that where constructed since the last time the profile was considered.
	The priority queue is initialized with the profile $R_s$ where all agents have the same preferences.
	This profile is a good choice since the submatrix of this profile has full rank and the bistochasticity and equal treatment of equals constraints are already sufficient to construct all $\vect{e}_{(R_s, i, h)}$.
	Although we use a search algorithm, it has no ``goal profile'' in the usual sense but rather searches until it completed the indicator function or fails to do so.
	The advantage of best-first over depth-first or breath-first search is that it is much faster as it first evaluates profiles that are likely to be solved completely by the subroutine \Cref{alg:rsd-subroutine}.
	We observed that other methods visit the same profiles more frequently on average.
	
	The algorithm then combines the rows found by the subroutine with localizedness by multiplying the localizedness row with $-1$ if necessary and adding the row from the subroutine.
	More precisely, if $\isrsd(R, i, h) = 1$ and agent $i$ manipulates by rearranging houses above and below $h$, then $\isrsd(R', i, h) \gets 1$, where $R'$ is the profile agent $i$ manipulates to.
	Therefore, we can set $\isrsd(R', i, h) = 1$ if $\isrsd(R, i, h) = 1$.
	We further reduce the number of manipulations that need to be considered by only allowing swap manipulations of adjacent houses.
	However, this does not really constitute a restriction since the same manipulations can be carried out by performing multiple swaps, i.e., all other manipulations are linearly dependent on ``pairwise swap'' rows.
	
	This algorithm is still not efficient enough to solve the case of $n = 5$.
	In order to reduce the size of \domain, we take advantage of the symmetry of the axioms and prove that the algorithm can assume symmetry without loss of generality.
	In particular, we show that the result of the algorithm on all canonical profiles \canonical generalizes to \domain when ensuring that a manipulation that leaves the domain falls back to a canonical profile.
	For example, if agent $i$ manipulates from profile $R \in \canonical$ to $R' \in \domain \setminus \canonical$ then the algorithm assumes $i$ manipulated from $R$ to $\mathit{canonical}(R', i)$, where $\mathit{canonical}$ is a function that maps a profile to the canonical profile.
	
	A very important detail here is that while this function always maps to a single profile, the manipulating agent $i$ might map to multiple agents in the canonical profile.
	To account for this we let the function $\mathit{canonical}$ also return a list of agents in the new profile that the manipulator can map to.
	If we want to use \Cref{alg:rsd} on \domain instead of \canonical, $\mathit{canonical}$ becomes the identity function and simply returns the corresponding profile and agent.
	
		\begin{algorithm}[tbp]
		\caption{Verify \rsd Characterization}
		\label{alg:rsd}
		\begin{algorithmic}
			\Input \Desc{$n$}{Number of Agents and Objects}
			\EndInput
		\end{algorithmic}
		\begin{algorithmic}[1]
			\State $\isrsd \gets 0$ \Comment{Initialize $\isrsd: \canonical \times N \times U \rightarrow \{1, 0\}$ as the constant $0$ function.}
			\ForAll{$(R, i, h) \in \domain \times N \times H$}
			\If{$\rsd(R, i, h) = 0$}\Comment{$f(R,i,h)=0$ due to support efficiency.}
			\State $\isrsd(R, i, h) \gets 1$
			\EndIf
			\EndFor
			
			\State \emph{queue} $\gets$ \textbf{new} \emph{Priority Queue}
			\State $\mathit{queue.insert}(R_s, 0)$
			\While{\emph{queue} \textbf{is not} \emph{empty}}
			\State $R \gets \mathit{queue.findmax}()$
			\State $\mathit{queue.deletemax}()$
			\State\label{step:QP}$\mathit{\Cref{alg:rsd-subroutine}}(R, \isrsd)$ \Comment{\Cref{alg:rsd-subroutine} updates \isrsd.}
			\ForAll{$R'$ s.t. $\exists i \in N\ \forall j \ne i\ \succ_j = \succ'_j \land \exists k \in [n]\ \succ'_i = \mathit{swap}(\succ_i, k, k+1)$}\label{step:neighbor}
			\State $R^*, \mathit{manipulators} = \mathit{canonical}(R', i)$
			\State $\Delta \gets 0$
			\ForAll{$l \in [n] \setminus \{k, k + 1\}$}
			\ForAll{$i^* \in \mathit{manipulators}$}
			\State $h \gets \mathit{lth\ best}(\succ_i, l)$
			\State $h^* \gets \mathit{lth\ best}(\succ^*_{i^*}, l)$
			\If{$\isrsd(R, i, h)=1$ \textbf{and} $\isrsd(R^*, i^*, h^*)\neq 0$}
			\State $\isrsd(R^*, i^*, h^*) \gets 1$
			\State $\Delta \gets \Delta + 1$
			\EndIf
			\EndFor
			\EndFor
			\If{$\Delta > 0$}
			\If{$R' \in \mathit{queue}$}
			\State $\mathit{queue.increasepriority}(R^*, \Delta)$
			\Else
			\State $\mathit{queue.insert}(R^*, \Delta)$
			\EndIf
			\EndIf
			\EndFor
			\EndWhile
			\State \Return $\isrsd \equiv 1$ \Comment{The characterization holds if $\isrsd$ equals $1$ for every $(R,i,h)$.}
		\end{algorithmic}
	\end{algorithm}
	
	\begin{lemma}\label{lem:sym}
		The result of \Cref{alg:rsd} holds for \domain when the search space is restricted to \canonical.
	\end{lemma}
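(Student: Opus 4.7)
The plan is to exploit the symmetry of the axiom system. Define the group $G = S_N \times S_H$ acting on $\domain \times N \times H$ by $(\pi, \tau) \cdot (R, i, h) = ((\pi, \tau) \cdot R,\ \pi(i),\ \tau(h))$, where $(\pi, \tau) \cdot R$ permutes the agents by $\pi$ and relabels the houses by $\tau$. By construction, $\canonical$ picks exactly one representative per $G$-orbit on $\domain$, so every triple $(R, i, h) \in \domain \times N \times H$ is the $G$-image of some triple $(R^*, i', h')$ with $R^* \in \canonical$.

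First I would verify that the row multiset of $\vect{A}$ is $G$-stable: for every axiom row $\vect{a}$ and every $g \in G$, the column-permuted row $g \cdot \vect{a}$ is again (possibly scaled by $-1$) a row of $\vect{A}$. This is a direct check on the four row types in \Cref{sec: matrix-interpretation}, using that \rsd is itself symmetric (so the support-efficiency pattern is preserved), that equal treatment of equals concerns agent pairs with identical preferences (a property preserved by $\pi$), and that $(\pi, \tau)$ maps adjacent swap manipulations to adjacent swap manipulations. This equivariance implies that any derivation of a unit row $\vect{e}_{(R^*, i, h)}$ via elementary row operations inside $\vect{A}$ can be \emph{lifted} along any $g \in G$ to a derivation of $\vect{e}_{g \cdot (R^*, i, h)}$.

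Second I would argue that \Cref{alg:rsd-subroutine} and the main loop of \Cref{alg:rsd}, when replayed with column labels permuted by $g$, constitute a valid algorithm run on the profile $g \cdot R^*$. The subroutine is explicitly symmetric with respect to its inputs, and the bookkeeping in the main loop depends only on the combinatorial shape of the bistochasticity and ETE rows, so the lifted run produces the lifted $\isrsd$ updates. Consequently, if the restricted algorithm certifies $\isrsd(R^*, i, h) = 1$ for every $(R^*, i, h) \in \canonical \times N \times H$, then $G$-equivariance forces $\isrsd(R, i, h) = 1$ for every $(R, i, h) \in \domain \times N \times H$, so $\vect{A}$ has full rank by \Cref{cor:eq}.

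The main obstacle is handling localizedness across orbit boundaries. When the algorithm performs an adjacent swap from a canonical $R$ by agent $i$ and lands in a non-canonical $R'$, it records the update at $\bar R = \mathit{canonical}(R')$ using a list $\mathit{manipulators}$ of candidate agent identities. I would have to verify that $\mathit{manipulators}$ coincides with the image of $i$ under all $g \in G$ sending $R'$ to $\bar R$; equivalently, with the orbit of any chosen preimage under the stabilizer of $\bar R$ in $G$. This guarantees that every localizedness row $\vect{e}_{(R_1, i, h)} - \vect{e}_{(R_2, i, h)}$ occurring in the full matrix is available as the $g$-lift of some row used by the restricted run, so the restricted algorithm faithfully covers all of $\domain$.
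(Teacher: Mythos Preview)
Your proposal is correct and follows essentially the same strategy as the paper. The paper argues by explicit induction over iterations of the outer loop of \Cref{alg:rsd}, maintaining the invariance $\isrsd^*(R, i, h) = \isrsd(\pi(\tau(R)), \pi(i), \tau(h))$ between the restricted run on $\canonical$ and a suitably chosen full-domain run; your group-theoretic framing via $G$-equivariance of the row set of $\vect{A}$ and lifting of derivations is the same argument packaged more abstractly, and your identification of the orbit-boundary issue with the $\mathit{manipulators}$ list matches exactly the subtlety the paper singles out.
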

	
	\begin{proof}
		We show that \Cref{alg:rsd} on \canonical is equivalent to \Cref{alg:rsd} on \domain by induction.
		Let $\isrsd: \domain \times N \times H \rightarrow \{0, 1\}$ and $\isrsd^*: \canonical \times N \times H \rightarrow \{0, 1\}$ be the indicator functions for the first and second program, respectively.
		Denote $\Pi$ as the set of all permutations of agents and $\Tau$ as the set of all permutations of the houses, i.e., $\pi \in \Pi$ and $\tau \in \Tau$ map a preference profile $R$ to another preference profile $R'=\pi(\tau(R))$ by rearranging the agents according to a permutation $\pi$ and renaming the houses according to a permutation $\tau$.
		Obviously, $|\Pi| = |\Tau| = n!$ as both sets consists of $n!$ permutations of the agents and houses, respectively.
		Our induction proof is based on the idea that \Cref{alg:rsd} on \domain will, after some extra steps, return to a state that is equivalent to \Cref{alg:rsd} on \canonical.
		We show this by induction over the outermost loop of \Cref{alg:rsd}.
		In particular, we show that there exists an execution of \Cref{alg:rsd} on \domain such that the following invariance holds at some point.
		\begin{equation}\label{eq:invariance}
			\isrsd^*(R, i, h) = \isrsd(\pi(\tau(R)), \pi(i), \tau(h)) \quad \forall R \in \canonical, \pi \in \Pi, \tau \in \Tau, i \in N, h \in H
		\end{equation}
		
		\emph{Induction base:} At the start of the algorithm, $\isrsd = \isrsd^* \equiv 0$ meaning that the induction hypothesis trivially holds.
		It still holds after the support efficiency constraints are added to \isrsd since \rsd satisfies symmetry.
		
		\emph{Induction hypothesis:} \Cref{eq:invariance} holds at the start of the $k$-th iteration of the outermost loop.
		
		\emph{Induction step:} We show \Cref{eq:invariance} holds at the end of the $k$-th iteration of the outermost loop.
		\Cref{alg:rsd} will look at profile $R \in \canonical$ in the $k$-th iteration.
		Let the variant on \domain look at all profiles in $[R]$ which denotes the equivalence class of all profiles equivalent to $R$ by symmetry.
		Clearly, both algorithms do not change the indicator value of any profile that is not in $[R]$ or a neighbor of it.
		In line \ref{step:QP}, the algorithm calls the subroutine.
		
		The subroutine \Cref{alg:rsd-subroutine} is deterministic and permutations of the inputs result in the same permutations of the outputs implying that since the second program permutes the inputs, the outputs are also permuted.
		If the second program sets $\isrsd^*(R, i, h) = 1$, the first program is able to set $\isrsd(\pi(\tau(R)), \pi(i), \tau(h)) = 1$ for every $\pi \in \Pi,\tau \in \Tau$ by induction hypothesis.
		Therefore, the invariance condition is preserved for profiles in $[R]$.
		
		Next, in line \ref{step:neighbor}, the algorithm starts to iterate over neighbors of $R$ that can be reached by adjacent swap manipulations of the agents.
		Let $R'$ be an arbitrary neighboring profile, $i$ the manipulating agent, and $k \in [n-1]$ the position in agent $i$'s preferences such that for all $j \ne i$, the preferences stay the same (${\succ_j} = {\succ'_j}$) and ${\succ'_i} = \mathit{swap}(\succ_i, k, k + 1)$.
		Furthermore, let $R'' = \mathit{canonical}(R')$ be the canonical representation of $R'$ and $\pi' \in \Pi$, $\tau' \in \Tau$ be any pair of permutations that maps $R''$ to $R'$.
		For each $l \in [n] \setminus \{k, k + 1\}$ the algorithm performs the following operations.
		Let $h$ be agent $i$'s $l$th most preferred house.
		Then, if $\isrsd^*(R, i, h) = 1$ and $\isrsd^*(R'', i, h) = 0$, set $\isrsd^*(R'', i, h) \gets 1$.
		This is allowed since the localizedness row together with $\vect{e}_{(R, i, h)}$ and multiplication by $-1$ if necessary can reach $\vect{e}_{(R', i, h)}$.
		The first program performs the same operation not only for $R$ but for each profile in $[R]$.
		By induction hypothesis, $\isrsd^*(R, i, h) = \isrsd(\pi(\tau(R)), \pi(i), \tau(h))$ and $\isrsd^*(R'', i, h) = \isrsd(\pi(\tau(R'')), \pi(i), \tau(h))$ for all permutations $\pi \in \Pi$ and $\tau \in \Tau$.
		Thus, the condition of the if statement $\isrsd^*(R, i, h) = 1$ and $\isrsd^*(R'', i, h) = 0$ is true in the second program if and only if it is true in the first program for each permutation $\pi,\tau$.
		Consequently, $\isrsd^*(R'', i, h) = \isrsd(\pi(\tau(R'')), \pi(i), \tau(h)) \gets 1$ for all permutations $\pi$ and $\tau$.
		Again the induction hypothesis is preserved.
		Since no other operations change the indicator function, we conclude that the invariance holds after each step of \Cref{alg:rsd}.
	\end{proof}

	To summarize, it is sufficient to restrict the algorithm to \canonical and all actions of the algorithm can be represented as elementary row operations.
	As they do not change the rank of a matrix and the algorithm shows that the full identity matrix can be constructed from the matrix \vect{A}, we conclude that \vect{A} has full rank.
	\Cref{cor:eq} then implies that \Cref{con:strong-rsd} holds.

	\begin{theorem}
			\rsd is the only assignment rule that satisfies equal treatment of equals, support efficiency, and localizedness when $n\leq 5$.
	\end{theorem}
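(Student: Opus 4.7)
The plan is to invoke \Cref{cor:eq} and reduce the theorem to a purely linear-algebraic statement: it suffices to show that the matrix $\vect{A}$ encoding bistochasticity, support efficiency, localizedness, and equal treatment of equals has full rank $n^2 n!^n$ for each $n\le 5$. Since every manipulation performed by \Cref{alg:rsd} and its subroutine \Cref{alg:rsd-subroutine} is an elementary row operation on $\vect{A}$, a run that successfully constructs every unit row vector $\vect{e}_{(R,i,h)}$ certifies full rank and therefore \Cref{con:strong-rsd}, which in turn implies the claimed characterization because \emph{ex post} efficiency is stronger than support efficiency and strategyproofness is stronger than localizedness.

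Concretely I would execute \Cref{alg:rsd} starting from the profile $R_s$ in which all agents share the same preferences. On $R_s$, bistochasticity together with equal treatment of equals already forces every entry of $f(R_s)$ to equal $1/n$, so \Cref{alg:rsd-subroutine} produces every row $\vect{e}_{(R_s,i,h)}$ immediately. The best-first search then propagates these newly built rows to neighboring profiles via localizedness, each fresh row enabling further simplifications back in the subroutine. To render the search tractable I would invoke \Cref{lem:sym}, which permits restriction to canonical profiles $\canonical$ while still establishing full rank on all of $\domain$; this shrinks the state space by a factor of roughly $(n!)^2$ and is essential for $n \ge 4$.

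From an implementation standpoint I would work throughout with exact integer arithmetic, dividing only when a row has already been reduced to a single nonzero entry, so that no numerical roundoff can falsely certify full rank. The rows would be stored as sparse hash maps keyed by $(R,i,h)$, and $\isrsd$ would be a bit array indexed by canonical triples. For each $n\in\{2,3,4,5\}$ the algorithm is run until its priority queue empties and the theorem then follows from the observation that the returned indicator satisfies $\isrsd\equiv 1$. For $n\le 3$ the base cases can additionally be cross-validated against the Bogomolnaia--Moulin characterization (via \Cref{pro:supp-expo} and the fact that strategyproofness implies localizedness), giving an independent sanity check before trusting the larger runs.

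The main obstacle is sheer scale for $n=5$: even after the symmetry reduction of \Cref{lem:sym}, on the order of $10^7$ canonical profiles must be visited, many of them multiple times as newly propagated localizedness rows re-enter the queue. Ensuring both computational feasibility and correctness will be delicate in several places, most notably in computing the map $\mathit{canonical}(R',i)$ and in handling the case where the manipulating agent $i$ has several pre-images in the canonical profile, since in that situation a single localizedness step in $\domain$ corresponds to updating $\isrsd$ at several triples of $\canonical$ simultaneously. I would therefore validate the implementation by running it on $n\le 4$ with and without the symmetry reduction and confirming that the outputs agree, before relying on the $n=5$ result.
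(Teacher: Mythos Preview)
Your proposal is correct and follows essentially the same approach as the paper: reduce to full rank of $\vect{A}$ via \Cref{cor:eq}, establish full rank by running \Cref{alg:rsd} with subroutine \Cref{alg:rsd-subroutine} starting from the unanimous profile $R_s$, and appeal to \Cref{lem:sym} to restrict the search to canonical profiles. The additional implementation and validation details you sketch (exact arithmetic, sparse storage, cross-checking $n\le 4$ with and without the symmetry reduction) are sensible engineering refinements but do not change the underlying argument.
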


	\section{Further Results}
	
	In this section, we give counterexamples showing that certain approaches to prove \Cref{con:strong-rsd} (and the weaker original conjecture) are futile.
	
	\subsection{Characterizing \rsd via deterministic extreme points}\label{sec:dep}

	A natural way to achieve fairness is symmetrization.
	The idea is to take a deterministic assignment rule, apply it to every permutation of the agents' roles, and then randomize uniformly over those deterministic assignments. %
	
	We say that a set of assignment rules satisfies the \emph{deterministic extreme point (DEP) property} if it is convex and all of its extreme points are deterministic \citep[see, e.g.,][]{PyUn15a,GPS17a,RoSa20a}.
In this case, the set of assignment rules forms a polyhedron with deterministic extreme points.
A natural way to define sets of assignment rules that can potentially satisfy the DEP property is to consider a set of axioms that can be represented as linear inequalities.
In general, the question whether two sets of assignment rules that are defined by overlapping but different sets of axioms satisfy the DEP property is logically independent.
An important consequence of the DEP property is that all assignment rules in the set can be represented as convex combinations of the deterministic assignment rules that form the finite set of extreme points.

	\citet{Bade13b} conjectured that we can use this to prove a characterization of \rsd by showing that the set of strategyproofness and \emph{ex post} efficient rules satisfies the DEP property.
	\citet{PyUn15a} have shown that the set of strategyproof rules does not satisfy the DEP property by providing a counterexample.
	Unfortunately, this gives no indication as to whether this is also the case when considering strategyproofness in conjunction with \emph{ex post} efficiency.
	In fact, the random dictatorship theorem by \citet{Gibb77a} shows that, in the domain of voting, the set of strategyproof and \emph{ex post} rules satisfies the DEP property, even though the set of all strategyproof rules does not.

	We use a simple equivalence result from linear optimization to find a counterexample for $n = 3$ in the assignment domain.
	We then extend this counterexample to arbitrary $n \ge 3$ and show that the set of strategyproof and \emph{ex post} efficient assignment rules violates the DEP property.
	
	\begin{theorem}
		The set of strategyproof and \emph{ex post} efficient assignment rules violates the DEP property for $n \ge 3$.
	\end{theorem}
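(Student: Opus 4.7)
The plan is to exhibit a non-deterministic extreme point of the polyhedron of strategyproof and \emph{ex post} efficient assignment rules, first for $n=3$ and then embed it into arbitrary $n \geq 4$.

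For $n = 3$, I would represent assignment rules as vectors $\vect{x} \in \mathbb{R}^{n^2 n!^n}$ exactly as in \Cref{sec: matrix-interpretation}, but encoding strategyproofness by its full inequality form (not just localizedness) and \emph{ex post} efficiency via its support-efficiency equivalent (legitimate for $n=3$ by \Cref{pro:supp-expo}). The polyhedron $P_3$ is then cut out by finitely many linear equalities and inequalities. The ``simple equivalence from linear optimization'' I rely on is the LP vertex theorem: a feasible point is an extreme point of $P_3$ iff it is the unique maximizer of some linear objective $c^\top f$ over $P_3$, and consequently DEP fails for $P_3$ iff some $c$ attains a strictly higher value over $P_3$ than over the subset of deterministic rules in $P_3$. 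Concretely, I would choose a candidate objective (for instance, one that rewards fractional allocations no serial dictatorship achieves), solve the LP with the simplex method, and read off an extreme point $f^*$. Non-determinism is certified by producing one profile $R$ and triple $(R,i,h)$ with $f^*(R,i,h) \notin \{0,1\}$.

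To extend to $n \geq 4$, I would construct $f^*_n$ by embedding $f^*$ into a \emph{core} subdomain of profiles on which every \emph{ex post} efficient rule is forced to allocate $h_k$ to agent $k$ for $k \geq 4$. A natural choice is to declare a profile core if, for each $k \geq 4$, agent $k$ top-ranks $h_k$ and no other agent top-ranks $h_k$, so that a top trading cycles analysis pins down the allocation of $h_4, \dots, h_n$. On a core profile the rule then reduces to a three-agent problem on $\{h_1, h_2, h_3\}$ with the first three agents; place $f^*$ there. On all other profiles, define $f^*_n = \rsd$. Bistochasticity, \emph{ex post} efficiency, and non-determinism of $f^*_n$ follow by construction.

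The main obstacle will have two parts. First, strategyproofness of $f^*_n$ across the boundary between core and non-core profiles: a single adjacent swap may move a profile out of the core, causing the outcome to jump from $f^*$ to $\rsd$, and I must verify no such boundary-crossing swap is beneficial. The cleanest safeguard is to ensure that the core is swap-closed in the coordinates affecting the last $n-3$ agents' top-ranked slots, reducing boundary checks to case analyses on the positions actually swapped. Second, extremality of $f^*_n$ in $P_n$: a convex decomposition $f^*_n = \lambda g + (1-\lambda)g'$ inside $P_n$ restricts on core profiles to a convex decomposition of $f^*$ inside $P_3$, forcing $g = g' = f^*_n$ there; but on non-core profiles $\rsd$ is itself not extreme, so rigidity must be propagated across the boundary via the strategyproofness inequalities linking core and non-core profiles. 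An alternative is to replace $\rsd$ on the non-core region by a single fixed serial dictatorship $\serd_\pi$ chosen so as to be strategyproof-compatible with $f^*$ at the boundary; such a choice makes $f^*_n$ deterministic outside the core and reduces extremality to the core, but shifts the burden back to a boundary-strategyproofness check. Discharging these boundary obligations is the delicate step.
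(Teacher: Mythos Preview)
Your $n=3$ plan is essentially the paper's: set up the polyhedron, invoke the equivalence of extreme points and basic feasible solutions, and search for a non-deterministic vertex (the paper optimizes in random directions and exhibits an explicit BFS). No issue there.

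The $n\ge 4$ extension, however, differs from the paper and contains a real gap. Your core/non-core split creates exactly the boundary problem you flag, and neither proposed fix resolves it. The ``swap-closedness'' safeguard does not help: an agent in $\{1,2,3\}$ whose true ranking begins $h_1\succ h_4\succ\cdots$ crosses the boundary with a single adjacent swap at the top, and there is no reason the jump from $f^*$ to \rsd (or to any fixed $\serd_\pi$) respects nonperverseness---this would have to be checked against the specific $f^*$ you find, and you give no argument that such an $f^*$ exists. Second, you are aiming to prove that $f^*_n$ is an extreme point of $P_n$, which is strictly more than needed and, as you note, fails outright on the non-core region because \rsd is not extreme; ``propagating rigidity'' through strategyproofness inequalities is not a mechanism that obviously works.

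The paper sidesteps both issues with a uniform construction: on \emph{every} profile, run serial dictatorship on agents $1,\ldots,n-3$ first and then apply the $3$-agent counterexample $g$ to the remaining three agents on the remaining three houses. There is no boundary, so strategyproofness is immediate (the first $n-3$ agents face a serial dictatorship; the last three cannot affect which houses remain and face a strategyproof $g$). And instead of proving extremality, the paper proves only what DEP-failure actually requires: that $f$ cannot be written as a convex combination of \emph{deterministic} strategyproof \emph{ex post} efficient rules. This follows by restricting any putative decomposition $f=\sum_\ell\lambda_\ell f_\ell$ to the subdomain where agents $1,\ldots,n-3$ all rank $h_1\succ\cdots\succ h_{n-3}$ on top; on that subdomain each $f_\ell$ restricts to a deterministic strategyproof \emph{ex post} efficient $3$-agent rule, yielding a forbidden decomposition of $g$.
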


	\begin{proof}
		We first show a counterexample for $n = 3$ and then extend it to arbitrary $n > 3$.
		
		Let $P_n$ be a polyhedron defined by the localizedness, support efficiency, row and column sum equalities as well as the nonnegativity, and nonperverseness inequalities for $n$ agents and houses.
		Since $\vect{x}_\rsd \in P_n$, it is nonempty. Furthermore, for any order $\pi$ of the agents, $\serd_{\pi} \in P_n$. Moreover, \Cref{pro:supp-expo} implies that all $\vect{x} \in P_3$ satisfy \emph{ex post} efficiency.
		
		For arbitrary $n$, let $\vect{x} \in P_n$. 
		We say that a linear equality or inequality constraint is active for $\vect{x}$ if it is satisfied with equality.
		The vector $\vect{x}$ is a \emph{basic feasible solution (BFS)} if it satisfies all constraints and there are $n^2n!^n$ linearly independent active constraints. Note that equality constraints are always active. 
		One can then show that $\vect{x}$ is a BFS if and only if it is a extreme point of $P_n$ \citep[see, e.g.,][]{BeTs97a}.

		To show the statement for $n=3$, we therefore have to find a BFS and thus, an extreme point that does not correspond to a deterministic rule. 
		The number of BFSs is in general exponential and naive search might never find a counterexample even if they exist.
		However, for this particular case, we found multiple counterexamples by optimizing in random directions. 
		In this way we found a BFS which does not correspond to a deterministic rule.
		We can conclude that the set of strategyproof and \emph{ex post} efficient rules does not satisfy the DEP property when $n = 3$. The counterexample is specified in \Cref{app:sp-ef-rule}.
		
		To prove the statement for arbitrary $n > 3$, we construct a function $f$ for $n$ alternatives based on a counterexample $g$ for $m<n$ alternatives.
		In particular, we already found a counterexample $g$ for the case $m = 3$.
		We will show that $f$ cannot be represented as a convex combination over deterministic rules in $P_n$ if $g$ cannot be represented as one in $P_m$ and thus $f$ is a valid counterexample for $n$ if $g$ is a counterexample for $m$.
		
		Let $f$ be the function that performs serial dictatorship with the identity order on the first $n - m$ agents and then executes $g$ on the remaining agents and houses. More formally, with priority order $\pi=[n]$, $f(R,i)=\serd_\pi(R,i)$ for each $i \le n-m$ and profile $R \in \mathcal{R}$. Denote by $H^R_m$ the set of houses that were not assigned to one of the first $n-m$ agents. For all other agents $i>n-m$, $f(R,i)=g(R|_{H^R_m},i)$ where $R|_{H^R_m}$ restricts the profile to the preferences of the last $m$ agents over the houses in $H^R_m$. In particular, if of one these last $m$ agents swaps two adjacent houses where at most one of these houses is in $H^R_m$, the assignment returned by $f$ does not change.

		Clearly, $f$ is \emph{ex post} efficient.
		This follows from the fact that the first $n - m$ agents are assigned houses by a serial dictatorship and the last $m$ agents by a rule that is itself \emph{ex post} efficient by assumption.
		For each profile $R \in \domain$, we first decompose the outcome of $g(R)$ and then connect each of those serial dictatorships with the order that $f$ uses for the first $n-m$ agents.
		Thus, we found a decomposition of $f(R)$ as a mixture of serial dictatorships.
		Since $f$ can be represented as a mixture of serial dictatorships for each profile, it is \emph{ex post} efficient.
		
		Next, we show that $f$ is strategyproof.
		The last $m$ agents have no successful manipulation since $g$ is strategyproof and they cannot change the houses shared between them by changing their preferences.
		The first $n - m$ agents also have no successful manipulation since they receive their houses according to a serial dictatorship.
		
		Furthermore, the outcome at each profile is a bistochastic matrix and all entries are nonnegative.
		Therefore, $f$ satisfies all constraints and thus, $f \in P_n$.
		
		Assume now for contradiction that $f$ can be represented as a convex combination over deterministic rules in $P_n$, i.e., there exist deterministic rules $f_1,\dots,f_k \in P_n$ and weights $\lambda_1,\dots, \lambda_k>0$ with $\sum_{\ell=1}^k \lambda_\ell=1$ such that $f=\sum_{\ell=1}^k \lambda_\ell f_\ell$. Consider the subdomain of preferences $\mathcal{R}'$ where the first $n-m$ agents have the same preferences, w.l.o.g. $h_1 \succ h_2 \succ \dots \succ h_n$ and all agents rank $h_1\succ h_2 \succ \dots \succ h_{n-m} \succ h_i$ for any $i>n-m$. By definition of $f$, $f(R,i)=f_\ell(R,i)=h_i$ for $i \le n-m$ and $f(R,i)=g(R|_{H^R_m},i)$ for $i>n-m$. Thus, $g(R|_{H^R_m},i)=\sum_{\ell=1}^k \lambda_\ell f_\ell(R,i)$ for $i>n-m$. Defining $f'_\ell$ as the restriction of $f_\ell$ to $\mathcal{R}'$ for each $\ell \in [k]$, $g(R)=\sum_{\ell=1}^k \lambda_\ell f'_\ell(R)$ for $R \in \mathcal{R}'$. As for each $\ell$, $f'_\ell$ inherits \emph{ex post} efficiency and strategyproofness from $f_\ell$ (which are \emph{ex post} efficient and strategyproof on the larger domain $\mathcal{R}$ by assumption), this contradicts our assumption that $g$ cannot be represented by a convex combination of deterministic rules satisfying \emph{ex post} efficiency and strategyproofness.
		All in all, $f$ cannot be represented as a convex combination over deterministic rules in $P_n$. 
	\end{proof}

	We have leveraged the same approach to construct a counterexample for the set of all assignment rules that satisfy localizedness and support efficiency (or equivalently, \emph{ex post} efficiency) when $n=3$.
	Moreover, we explored whether \emph{non-bossiness}, a property used to characterize serial dictatorships \citep{Sven99a}, might be helpful for characterizing \rsd.
	Non-bossiness requires that if an agent modifies his preferences and still receives the same house, then the allocation of all other agents has to remain the same \citep{SaSo81a}.
	Generalizations of non-bossiness to probabilistic assignment rules \citep[e.g.,][]{Bade16a} fail to be convex.
	To see this, consider two non-bossy rules, a profile in which an agent changes his preferences, and his distribution changes for both rules. When randomizing between both rules, the changes can cancel each other out, but there is no guarantee that the changes for the other agents will also cancel out.
	As a consequence, the set of non-bossy rules does not satisfy the DEP property unless one restricts it further to make it convex.

	\subsection{Characterizing \rsd in subdomains} \label{sec:rsdinsubdomains}
		
		\begin{figure}[tbp]
		\renewcommand{\arraystretch}{1.2}
		\centering
		$
		\begin{array}{cccccccccc}
			&&&&&&h_1 &h_2 &h_3 &h_4 \\\cmidrule{7-10}
			1: &h_1 &h_2 &h_3 &h_4&&\cellcolor{grey} \frac{3}{4} &  0  & \cellcolor{grey} \frac{1}{24}  &  \frac{5}{24} \\
			2: &h_2 &h_1 &h_3 &h_4&&\frac{1}{4}  &  \frac{1}{3}  &  \frac{1}{6}  &  \frac{1}{4} \\
			3: &h_2 &h_3 &h_1 &h_4&	&\cellcolor{grey} 0  & \frac{1}{3}  & \cellcolor{grey} \frac{5}{12}  &  \frac{1}{4} \\
			4: &h_2 &h_3 &h_4 &h_1&&0  &  \frac{1}{3}  &  \frac{3}{8}  &  \frac{7}{24}
		\end{array}
		$\\[0.5em]
		$
		\begin{array}{cccccccccc}
			&&&&&&h_1 &h_2 &h_3 &h_4 \\\cmidrule{7-10}
			1: &h_1 &h_2 &h_3 &h_4 &&\frac{3}{4} &  0  & \cellcolor{grey} 0 & \cellcolor{grey} \frac{1}{4}   \\
			2: &h_2 &h_1 &h_3 &h_4 &&\frac{1}{4}  &  \frac{1}{3}  &  \frac{1}{6}  &  \frac{1}{4}  \\
			3: &h_2 &h_3 &h_4 &h_1 &&0  &  \frac{1}{3}  & \cellcolor{grey} \frac{5}{12}  & \cellcolor{grey} \frac{1}{4}  \\
			4: &h_2 &h_3 &h_4 &h_1 &&0  &  \frac{1}{3}  & \cellcolor{grey} \frac{5}{12}  & \cellcolor{grey} \frac{1}{4}
		\end{array}
		$\\[0.5em]
		$
		\begin{array}{cccccccccc}
			&&&&&&h_1 &h_2 &h_3 &h_4\\\cmidrule{7-10}
			1: &h_2 &h_1 &h_3 &h_4 &&\cellcolor{grey} \frac{1}{2} &  \frac{1}{4}  & \cellcolor{grey} \frac{1}{24}  &  \frac{5}{24}  \\
			2: &h_2 &h_1 &h_3 &h_4 &&\cellcolor{grey} \frac{1}{2}  &  \frac{1}{4}  & \cellcolor{grey} \frac{1}{24}  &  \frac{5}{24} \\
			3: &h_2 &h_3 &h_1 &h_4  &&\cellcolor{grey} 0  &  \frac{1}{4}  & \cellcolor{grey} \frac{1}{2}  &  \frac{1}{4}  \\
			4: &h_2 &h_3 &h_4 &h_1 &&0  &  \frac{1}{4}  &  \frac{5}{12}  &  \frac{1}{3}  \\
		\end{array}
		$\\[0.5em]
		$
		\begin{array}{cccccccccc}
			&&&&&&h_1 &h_2 &h_3 &h_4\\\cmidrule{7-10}
			1: &h_2 &h_1 &h_3 &h_4 &&\frac{1}{2}  &  \frac{1}{4}  & \cellcolor{grey} 0  & \cellcolor{grey} \frac{1}{4}   \\
			2: &h_2 &h_1 &h_3 &h_4 &&\frac{1}{2}  &  \frac{1}{4}  & \cellcolor{grey} 0  & \cellcolor{grey} \frac{1}{4}  \\
			3: &h_2 &h_3 &h_4 &h_1 &&0  &  \frac{1}{4}  & \cellcolor{grey} \frac{1}{2}  & \cellcolor{grey} \frac{1}{4}  \\
			4: &h_2 &h_3 &h_4 &h_1 &&0  &  \frac{1}{4}  & \cellcolor{grey} \frac{1}{2}  & \cellcolor{grey} \frac{1}{4}  \\
		\end{array}
		$\\[0.5em]
		$
		\begin{array}{cccccccccc}
			&&&&&&h_1 & h_2 & h_3 & h_4\\\cmidrule{7-10}
			1: &h_2 &h_1 &h_3 &h_4 &&\frac{2}{3} &  \frac{1}{4}  &  0  &  \frac{1}{12}   \\
			2: &h_2 &h_3 &h_1 &h_4 &&\cellcolor{grey} \frac{1}{6}  &  \frac{1}{4}  &  \frac{1}{3}  & \cellcolor{grey} \frac{1}{4}  \\
			3: &h_2 &h_3 &h_4 &h_1 &&\cellcolor{grey}  \frac{1}{12}  &  \frac{1}{4}  &  \frac{1}{3}  & \cellcolor{grey} \frac{1}{3}  \\
			4: &h_2 &h_3 &h_4 &h_1 &&\cellcolor{grey}  \frac{1}{12}  &  \frac{1}{4}  &  \frac{1}{3}  & \cellcolor{grey} \frac{1}{3}  \\
		\end{array}
		$
		\caption{The five canonical profiles are the only canonical profiles for which the proposed rule returns a different output than \rsd. Furthermore, only entries marked in gray differ from \rsd. The rule also satisfies symmetry within domain \chch.}
		\label{fig:chch}
	\end{figure}

	We suspect that the characterization cannot hold in many subdomains of \domain and prove this exemplarily for the following subdomain proposed by \citet{ChCh16a}.
	
	Consider the subdomain \chch where all agents have the same ranking over all but one house. This domain is rich enough for the impossibility of equal treatment of equals, strategyproofness, and ordinal efficiency by \citet{BoMo01a}.
	In this domain, \rsd is \emph{not} the only rule satisfying equal treatment of equals, strategyproofness, and \emph{ex post} efficiency for $n=4$.
	We identified an alternative rule using quadratic programming that has the maximal $\ell_2$-distance to \rsd when considering the sum over all profiles of the $\ell_2$-distance of the different assignments.
	Furthermore, the rule satisfies symmetry on the subdomain, i.e., profiles that are in the same equivalence class as given profiles have the same assignment permuted accordingly.

	\begin{restatable}{theorem}{chchtheorem} \label{thm:chch}
		\rsd is not characterized by equal treatment of equals, \emph{ex post} efficiency, and strategyproofness in the domain \chch.
	\end{restatable}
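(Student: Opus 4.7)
The plan is to prove Theorem~\ref{thm:chch} by exhibiting the concrete alternative rule $f$ specified in Figure~\ref{fig:chch} and verifying that it satisfies equal treatment of equals, \emph{ex post} efficiency, and strategyproofness on \chch while differing from \rsd. Since the rule is declared to be symmetric on \chch, every profile in \chch lies in the equivalence class of one of the canonical profiles, and the assignment on a non-canonical profile is obtained by applying the corresponding permutation of agents and houses. Thus I only need to define $f$ on canonical profiles: on the five profiles displayed in Figure~\ref{fig:chch}, $f$ is given by the matrices shown, and on every other canonical profile, $f$ agrees with \rsd. Since both $f$ and \rsd are symmetric, it suffices to verify each axiom on canonical profiles.

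First I would check that $f$ is well-defined: each row and column of the five displayed matrices sums to one, and each entry is nonnegative. Equal treatment of equals follows by direct inspection of the rows corresponding to agents with identical preferences in each of the five matrices (for example in the second profile, agents $3$ and $4$ share preferences and share the same row). For \emph{ex post} efficiency, I would exhibit for each of the five profiles an explicit convex decomposition of $f(R)$ into serial dictatorship outcomes $\serd_\pi(R)$; since all efficient deterministic assignments on a profile coincide with some $\serd_\pi(R)$, such a decomposition certifies \emph{ex post} efficiency. On the remaining canonical profiles the required decomposition is simply the uniform one that defines \rsd.

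For strategyproofness I would invoke the localizedness-plus-nonperverseness characterization recalled in the paper: it suffices to verify that for every adjacent swap by a single agent between two profiles in \chch, the cumulative upper-contour probabilities weakly decrease when switching from truth to a misreport. By symmetry, I only need to check swaps whose source profile is canonical; each such swap leads either to another canonical profile in \chch or to a profile symmetric to one, and in the latter case I compose with the permutation that returns it to canonical form. Swaps between two profiles where $f$ agrees with \rsd are automatic; swaps between two profiles where $f$ differs from \rsd, or between a differing and a non-differing profile, come down to finitely many inequality checks on the five displayed matrices and their neighbors.

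The main obstacle is the strategyproofness verification: although each individual inequality check is elementary, the number of adjacent-swap manipulations one must enumerate (after folding in symmetry) is substantial, and one has to be careful that the five modified profiles are genuinely closed under manipulation within \chch, so that perturbations do not leak unnoticed into non-canonical neighbors. Distinctness from \rsd is immediate since, for instance, the top-left entry in the first profile is $\tfrac{3}{4}$ while \rsd assigns house $h_1$ to agent $1$ with probability strictly less than~$\tfrac{3}{4}$ in that profile; hence the exhibited $f$ witnesses that equal treatment of equals, \emph{ex post} efficiency, and strategyproofness do not characterize \rsd on \chch.
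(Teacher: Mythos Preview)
Your proposal is correct and follows essentially the same approach as the paper: both exhibit the explicit rule from Figure~\ref{fig:chch}, extend it symmetrically to all of \chch, and rely on checking the three axioms on canonical profiles. The paper's proof is in fact even terser than yours---it simply asserts that the displayed rule (found via quadratic programming) satisfies the axioms and differs from \rsd---whereas you spell out the verification strategy (bistochasticity, inspection for equal treatment, explicit Birkhoff decompositions, and adjacent-swap checks for strategyproofness) without carrying it out in full; so the two are the same proof at different levels of detail.
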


\begin{proof}
The rule defined in \Cref{fig:chch} satisfies all three axioms and was found using quadratic programming. It is equal to RSD on all canonical profiles except the five shown in \Cref{fig:chch}. Profiles in the same equivalence class receive the same random assignment permuted accordingly.
\end{proof}

	\section{Conclusion}\label{sec: results}
	
	The current state of \rsd characterizations via equal treatment of equals, \emph{ex post} efficiency, and strategyproofness for small $n$ is summarized in \Cref{fig:results}.
	The first characterization for $n = 3$ was shown by \citet{BoMo01a}.
	In their proof, they use a lemma that is based on a weakening of support efficiency.
	Since \emph{ex post} efficiency and support efficiency are equivalent for $n = 3$,
	full \emph{ex post} efficiency is not required for $n = 3$.
	
	Recently, \citet{San22a} has shown via a computer-aided proof that the characterization holds for $n \le 4$ using symmetry, support efficiency, and localizedness.

It remains an open problem whether a characterization of \rsd via \emph{ex post} efficiency, strategyproofness and equal treatment of equals holds for arbitrary $n$. On the one hand, our results suggest that such a characterization might indeed hold, even when weakening efficiency and strategyproofness and without additionally demanding symmetry. In fact, the weaker axioms, in particular support efficiency instead of \emph{ex post} efficiency, seem to be a lot easier to handle for computers as well as humans.   
On the other hand, in case the characterization does not hold, our results show that another \emph{ex post} efficient and strategyproof rule that treats equals equally can only differ from \rsd when $n\geq 6$, casting doubt on the existence of a closed-form representation of any such rule.

We hope that the linear algebraic interpretation of the problem presented in this paper will prove beneficial for a complete characterization of \rsd.
Computer-generated counterexamples show that two alternative natural approaches for proving the characterization (using deterministic extreme points or restricted domains of preferences) are inadequate.

	\begin{figure}[tbp]\centering
		\begin{tabular}{rccl}
			 & Extra Condition & Strategyproofness & Source \\\midrule
			$n \le 3$ & --- & strategyproofness & \citet{BoMo01a} \\
			$n \le 4$ & symmetry & only localizedness & \citet{San22a} \\
		\end{tabular}
		\caption{Overview of characterizations of \rsd via equal treatment of equals, support efficiency, and strategyproofness for small $n$. It is open whether \emph{ex post} efficiency and nonperverseness are required for larger $n$.}
		\label{fig:results}
	\end{figure}

 \section*{Acknowledgments}

 {This work was supported by the Deutsche Forschungsgemeinschaft under grants \mbox{BR 2312/11-2} and \mbox{BR 2312/12-1}.
 We thank Itai Ashlagi, Sophie Bade, Florian Brandl, Chris Dong, Hervé Moulin, Marek Pycia, Fedor Sandomirskiy, and Omer Tamuz for helpful discussions.

\appendix

\section{Non-decomposable strategyproof and \emph{ex post} efficient rule}\label{app:sp-ef-rule}

The following assignment rule is strategyproof and \emph{ex post} efficient but cannot be represented as a convex combination of deterministic rules that satisfy these axioms.
The rule behaves similarly to \rsd in many profiles and like a serial dictatorship in others.
To highlight this, we marked all entries in which the rule disagrees with \rsd in gray.
Finally, we note that the rule only returns the probability values $0, \nicefrac{1}{3}, \nicefrac{2}{3}, 1$ in contrast to \rsd and serial dictatorships.

\vspace{1em}
\arraycolsep=2pt
\def\arraystretch{1.2}
\centering
\tiny
\noindent$

$\\\vspace{1em}

\end{document}